% ----------------------------------------------------------------
% Article Class (This is a LaTeX2e document)  ********************
% ----------------------------------------------------------------
\documentclass[hidelinks, 12pt]{article}
\usepackage[english]{babel}
\usepackage{amssymb,amsmath,amsthm,amsfonts,mathtools}
\usepackage{mathabx}
\usepackage[top=3.5cm, bottom=3.5cm, left=3.3cm, right=3.3cm]{geometry}
\usepackage{graphicx}
\usepackage{amsfonts}
\usepackage[utf8]{inputenc}
\usepackage{tikz}
\usepackage{tikz-cd}
\usetikzlibrary{trees}
\usepackage{verbatim}
\usepackage{appendix}
\usepackage{natbib}
\usepackage{datetime}
\usepackage{setspace}
\usepackage{bookmark}
\usepackage{hyperref}

% THEOREMS -------------------------------------------------------
%[section]

\newtheorem{prop}{Proposition}
\theoremstyle{definition}

\theoremstyle{proof}

\theoremstyle{definition}

\numberwithin{equation}{section}
% ----------------------------------------------------------------
\renewenvironment{abstract}
 {\small
  \begin{center}
  \bfseries \abstractname\vspace{-.5em}\vspace{0pt}
  \end{center}
  \list{}{
    \setlength{\leftmargin}{.05cm}%
    \setlength{\rightmargin}{\leftmargin}%
  }%
  \item\relax}
 {\endlist}

\onehalfspacing

\begin{document}

\onehalfspacing
%%Play optimin---because everyone guarantees super-Nash payoffs even if others deviate profitably from optimin

%Play optimin---because everyone is better off with optimin than Nash equilibrium
%Play optimin---'cos everyone super-Nash payoff
%Optimin guarantees super-Nash payoffs to every player even if others deviate profitably from it

\title{Optimin achieves super-Nash performance}
\author{Mehmet S. Ismail\footnote{ {\footnotesize Department of Political Economy, King's College London, London, UK. E-mail: mehmet.ismail@kcl.ac.uk}}}
%and Department of Government, London School of Economics and Political Science
%Department of Economics, Maastricht University, P.O. Bow 616, 6200 MD Maastricht, The Netherlands.
%\date{Revised: 30th December, 2019}
\date{\today}
\maketitle
\begin{abstract}

Since the 1990s, AI systems have achieved superhuman performance in major zero-sum games where ``winning'' has an unambiguous definition. However, most social interactions are mixed-motive games, where measuring the performance of AI systems is a non-trivial task. In this paper, I propose a novel benchmark called \textit{super-Nash performance} to assess the performance of AI systems in mixed-motive settings.  I show that a solution concept called optimin achieves super-Nash performance in every $n$-person game, i.e., for every Nash equilibrium there exists an optimin where every player not only receives but also \textit{guarantees} super-Nash payoffs even if the others deviate unilaterally and profitably from the optimin.

%Like a maximin profile never Pareto dominates a Nash equilibrium, a Nash equilibrium never Pareto dominates an optimin.
\end{abstract}

%\noindent \textit{JEL}: C70, D81
\noindent \textit{Keywords}: maximin criterion, noncooperative games, cooperative games, Nash equilibrium, traveler's dilemma, repeated prisoner's dilemma
\newpage
\section{Introduction and motivating examples}
\label{sec:intro}

Since the early 1990s, artificial intelligence (AI) systems have gradually achieved superhuman performance in major competitive games such as checkers, chess, Go, and poker \citep{campbell2002,schaeffer2007,silver2016,brown2019}. All these games belong to a class of zero-sum games in which the notion of ``winning'' is clearly defined. However, most social interactions are mixed-motive (general-sum) games, and measuring the performance of AI systems in this setting is a non-trivial task. 

In this paper, I define a novel benchmark in $n$-person games: a strategy profile played in a game is said to achieve \textit{super-Nash performance} if each player (i) receives a super-Nash payoff and (ii) guarantees a super-Nash payoff under any unilateral profitable deviation by the other players. I next justify this definition with a theoretical and an empirical argument. First, a ``good'' measure should not be based solely on the external payoffs (i.e., rewards) received by the AIs but should also take into account the counterfactual payoffs---i.e., how well these AIs would perform if they had to play against an AI that ``exploited'' their strategies. For example, two AIs that always play cooperate in a prisoner's dilemma would receive a strictly greater payoff than their Nash equilibrium payoff. However, it is clear that they would perform poorly against an opportunistic AI that plays defect.  Second, overwhelming experimental evidence suggests that humans already achieve super-Nash performance in a variety of general-sum games in which there are gains from cooperation. These include the finitely repeated prisoner's dilemma, the finitely repeated public goods game, the centipede game, and the traveler's dilemma \citep{axelrod1980,mckelvey1992,capra1999,goeree2001,rubinstein2007,lugovskyy2017,embrey2017}. In these games, not only do humans consistently receive super-Nash payoffs, but they also guarantee these super-Nash payoffs even when other players anticipate non-Nash behavior and respond ``opportunistically.''

I prove that a solution concept called optimin \citep{ismail2020} achieves super-Nash performance in every $n$-person game. Informally, an optimin is a strategy profile in which each player maximizes their minimal payoff under unilateral profitable deviations of the others. Optimin achieves super-Nash performance in the sense that for every Nash equilibrium there exists an optimin where each player not only receives a super-Nash payoff but also \textit{guarantees} a super-Nash payoff even if the others deviate unilaterally and profitably from the optimin. To the best of my knowledge, optimin is the first solution concept that ensures super-Nash payoffs under profitable deviations of the others. As is well-known, a profile of maximin strategies can never Pareto dominate a Nash equilibrium. In a similar vein, a Nash equilibrium can never Pareto dominate an optimin point.

Furthermore, in any finite $n$-person game, there is an optimin point in pure strategies when the game is restricted to the pure strategies. In contrast, a Nash equilibrium in pure strategies does not exist in general. Moreover, computing mixed Nash equilibria is ''hard'' \citep{gilboa1989b,daskalakis2009}. This is not to argue that computing mixed optimin is more efficient. However, one advantage of pure strategy optimin is that it is sufficient to restrict attention to the set of pure strategy profiles (a finite set) compared to mixed strategy profiles (an infinite set). This property of pure optimin can be particularly valuable for finding a solution in large games.

\subsection{Illustrative example} 
\begin{figure}
	\[
	\begin{array}{ r|c|c|c| }
	\multicolumn{1}{r}{}
	&  \multicolumn{1}{c}{\text{Left}}
	& \multicolumn{1}{c}{\text{Center}}
	& \multicolumn{1}{c}{\text{Right}} \\
	\cline{2-4}
	\text{Top} &  100,100 & 100,105 & 0,0 \\
	\cline{2-4}
	\text{Middle}&  105,100 & 95,95 & 0,210  \\
	\cline{2-4}
	\text{Bottom}&  0,0 & 210,0 & 5,5\\
	\cline{2-4}
	\end{array}
	\qquad
	\begin{array}{ r|c|c|c| }
	\multicolumn{1}{r}{}
	&  \multicolumn{1}{c}{\text{Left}}
	& \multicolumn{1}{c}{\text{Center}}
	& \multicolumn{1}{c}{\text{Right}} \\
	\cline{2-4}
	\text{Top} &  (100,100) & (100,0) & (0,0) \\
	\cline{2-4}
	\text{Middle}&  (0,100) & (0,0) & (0,5)  \\
	\cline{2-4}
	\text{Bottom}&  (0,0) & (5,0) & (5,5)\\
	\cline{2-4}
	\end{array}
	\]
	\caption{An illustrative game (left) and its minimal payoffs (right). The unique optimin point is (Top, Left), whereas every strategy is a maximin strategy in this game. The unique Nash equilibrium is (Bottom, Right).}
	\label{fig:illustrative}
\end{figure}

To illustrate the optimin criterion, consider the game in Figure~\ref{fig:illustrative} (left) in which, for the sake of simplicity, the attention is restricted to pure strategies.\footnote{This example is taken from \citet{ismail2020}.} Consider strategy profile (Top, Left). Player 2 (he) has a profitable deviation to `Center' but player 1 (she), who plays Top, would still receive 100 at profile (Top, Center). Player 2 might deviate to `Right' from (Top, Left), though such a deviation is not plausible because he would incur a huge loss.\footnote{Note that this deviation would be allowed under the maximin strategy concept \citep{neumann1928,neumann1944}.} Therefore, each player's minimal payoff associated with (Top, Left) is 100 given the unilateral profitable deviation of the opponent. As illustrated in Figure~\ref{fig:illustrative} (right), (Top, Left) is the unique optimin point, because it maximizes the minimal payoffs.

\section{Super-Nash performance} 
Let $G=(\Delta X_i, u_i)_{i\in N}$ be a finite $n$-person non-cooperative game in mixed extension where a mixed strategy profile is denoted by $p\in\Delta X$.

A solution concept is said to achieve \textit{super-Nash performance} if for every $n$-person game $G$ and for every Nash equilibrium $r\in \Delta X$ in $G$, it gives a solution $q\in \Delta X$ such that for every player $i$ 
\[
\inf_{p'_{-i}\in B_{-i}(q)} u_{i}(q_i,p'_{-i}) \geq  u_{i}(r),
\]
where $B_i(p)\coloneqq\{p'_i\in \Delta X_i| u_i(p'_i,p_{-i})>u_i(p_i,p_{-i})\}\cup \{p_i\}$ and $B_{-i}(p)\coloneqq\bigtimes_{j\in N\setminus\{i\}}B_{j}(p)$. In addition, there exists a game $G'$ where the above inequality is strict for every $i$.

In plain words, a solution concept achieves super-Nash performance if for every game and for every Nash equilibrium in the game, the concept outputs a mixed strategy profile such that each player guarantees at least their Nash equilibrium payoff in the game under unilateral profitable deviations by the other players. In addition, there must at least be a game in which each player guarantees strictly more payoff than their Nash equilibrium payoff.

A mixed strategy profile $p^*\in \Delta X$ is said to satisfy the \textit{optimin criterion}, or called an \textit{optimin point}, if for every player $i$, $p^*$ solves the following multi-objective optimization problem \citep{ismail2020}. 
\[
p^* \in \arg \max_{q\in \Delta X} \inf_{p'_{-i}\in B_{-i}(q)} u_{i}(q_i,p'_{-i}).
\]

In plain words, a strategy profile $p^*$ is called an optimin if each player simultaneously maximizes their minimal payoff under the unilateral profitable deviations by other players.

The main difference between the optimin and the maximin is that the optimin restricts players to only profitable deviations from a strategy profile. Under the maximin concept, the potential ``performance'' of a strategy profile is assessed by its minimum payout under any deviation by the others, whereas under the optimin it is measured by its minimum payoff under unilateral and profitable deviations.

\section{Main result} 

Next, I show that for every Nash equilibrium there is an optimin point in which not only is every player (weakly) better off but also \textit{guarantees} to be better off even if the other players deviate unilaterally and profitably from the optimin.\footnote{\cite{ismail2019} proves that in every game there exists an optimin point in possibly mixed strategies.}

\begin{prop}[Super-Nash performance]
\label{prop:nashvsoptimin}
For every Nash equilibrium $q^*$ there exists an optimin point $p^*$ such that for every player $i$ $\inf_{p'_{-i}\in B_{-i}(p^*)} u_{i}(p^*_i,p'_{-i})\geq u_i(q^*)$. Moreover, a Nash equilibrium $q^*$ can never Pareto dominate an optimin point $p^*$.
\end{prop}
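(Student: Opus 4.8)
The plan is to derive everything from one elementary fact about Nash equilibria. If $q^*$ is a Nash equilibrium, then no player has a strictly profitable unilateral deviation, so $B_j(q^*)=\{q^*_j\}$ for every $j\in N$; hence $B_{-i}(q^*)=\{q^*_{-i}\}$ is a singleton and the infimum defining the optimin objective collapses, giving $\inf_{p'_{-i}\in B_{-i}(q^*)}u_i(q^*_i,p'_{-i})=u_i(q^*)$ for every $i$. Write $v_i(q):=\inf_{p'_{-i}\in B_{-i}(q)}u_i(q_i,p'_{-i})$ for the guaranteed payoff of $i$ at $q$; thus $v_i(q^*)=u_i(q^*)$, and moreover $v_i(q)\le u_i(q)$ for \emph{every} profile $q$ because $q_{-i}\in B_{-i}(q)$ always.

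For the first assertion I would then produce an optimin point lying ``above'' $q^*$. First one checks that each $v_i$ is upper semicontinuous on the compact set $\Delta X$: the correspondence $q\mapsto B_j(q)$ is lower semicontinuous---a strictly profitable deviation at $q$ stays strictly profitable near $q$ by continuity of $u_j$, and $q_j\in B_j(q)$ varies continuously with $q$---hence so is $q\mapsto B_{-i}(q)$, and the infimum of the jointly continuous map $(q,p'_{-i})\mapsto u_i(q_i,p'_{-i})$ over a lower semicontinuous correspondence is upper semicontinuous. (This semicontinuity also underlies the existence result cited above.) Now set $A:=\{q\in\Delta X:\ v_i(q)\ge u_i(q^*)\text{ for all }i\in N\}$; it contains $q^*$ by the previous paragraph and is closed, hence compact, by upper semicontinuity of the $v_i$. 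Let $p^*$ maximize the upper semicontinuous function $\sum_{i\in N}v_i$ over $A$. Then $p^*$ is an optimin point: if some $q\in\Delta X$ had $v_i(q)\ge v_i(p^*)$ for all $i$ with one strict inequality, then, since $v_i(p^*)\ge u_i(q^*)$ (as $p^*\in A$), we would get $q\in A$ and $\sum_i v_i(q)>\sum_i v_i(p^*)$, contradicting the choice of $p^*$; so no profile Pareto dominates $p^*$ in the vector $(v_1,\dots,v_n)$, which is exactly the optimin requirement. Finally $v_i(p^*)\ge u_i(q^*)$ for all $i$ simply because $p^*\in A$.

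For the ``moreover'' clause, suppose for contradiction that a Nash equilibrium $q^*$ Pareto dominates an optimin point $p^*$, i.e.\ $u_i(q^*)\ge u_i(p^*)$ for all $i$ with $u_{i_0}(q^*)>u_{i_0}(p^*)$ for some $i_0$. Chaining $v_i(q^*)=u_i(q^*)$, the domination $u_i(q^*)\ge u_i(p^*)$, and $u_i(p^*)\ge v_i(p^*)$ yields $v_i(q^*)\ge v_i(p^*)$ for every $i$, and along coordinate $i_0$ the chain is strict, $v_{i_0}(q^*)=u_{i_0}(q^*)>u_{i_0}(p^*)\ge v_{i_0}(p^*)$. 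Thus $q^*$ Pareto dominates $p^*$ in the vector of optimin objectives, contradicting optimality of $p^*$.

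The only place genuine work is needed is the upper semicontinuity of the $v_i$---equivalently, compactness of $A$ and attainment of $\max_A\sum_i v_i$. The deviation correspondence $B_{-i}$ is truly discontinuous: it can jump when some opponent's deviation becomes exactly indifferent, so one must verify that it jumps only in the lower-semicontinuous direction, which is precisely what prevents $v_i$ from dropping in the limit. Everything else is the one-line Nash computation $v_i(q^*)=u_i(q^*)$ together with routine bookkeeping for the Pareto order.
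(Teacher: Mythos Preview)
Your argument is correct, and it takes a genuinely different route from the paper's. Both proofs rest on the same one-line observation you isolate, namely $v_i(q^*)=u_i(q^*)$ at a Nash equilibrium. From there the paper simply invokes the cited existence theorem for optimin points and argues by contradiction: if no optimin $p^*$ satisfied $v_i(p^*)\ge u_i(q^*)$ for all $i$, then $q^*$ itself would have to be optimin. You instead \emph{construct} the desired optimin directly, by first establishing upper semicontinuity of each $v_i$ (via lower semicontinuity of the deviation correspondences $B_{-i}$), then maximizing $\sum_i v_i$ over the compact level set $A=\{q:v_i(q)\ge u_i(q^*)\ \forall i\}$. What your approach buys is self-containment and rigor: you do not need to import the existence result, and the semicontinuity step makes precise exactly why the Pareto frontier of $(v_1,\dots,v_n)$ is attained. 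What the paper's approach buys is brevity, at the cost of leaving the contradiction step (``this implies $q^*\in\arg\max_q v_i(q)$ for every $i$'') rather loosely justified.

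For the ``moreover'' clause the two arguments coincide; you make explicit the inequality $u_i(p^*)\ge v_i(p^*)$ (from $p^*_{-i}\in B_{-i}(p^*)$) that the paper uses implicitly when it jumps from $u_j(q^*)\ge u_j(p^*)$ to ``$p^*$ is not optimin.''

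One small caution: you read ``optimin'' as Pareto optimality of the vector $(v_1,\dots,v_n)$. The paper's displayed formula, read literally, asks for $p^*\in\arg\max_q v_i(q)$ simultaneously for every $i$, which is strictly stronger; however, the accompanying phrase ``multi-objective optimization problem'' and the structure of the paper's own proof are consistent with the Pareto reading you use, so this is a definitional wrinkle in the paper rather than a gap in your argument.
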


\begin{proof}
First, notice that if $q^*$ is an optimin point, then we are done. So I assume that $q^*$ is not an optimin point. To reach a contradiction, suppose that there exists a Nash equilibrium $q^*$ for every optimin $p^*$ there exists a player $j$ such that 
\[
\inf_{p'_{-j}\in B_{-j}(p^*)} u_{j}(p^*_j,p'_{-j})< u_j(q^*).
\]
This implies that for every $i$ $q^* \in \arg \max_{q\in \Delta X} \inf_{p'_{-i}\in B_{-i}(q)} u_{i}(q_i,p'_{-i})$ because for every player $i$ $\inf_{p'_{-i}\in B_{-i}(q^*)} u_{i}(q^*_i,p'_{-i}) = u_i(q^*)$. This means that $q^*$ is an optimin point, which contradicts to our supposition that it is not.

Next, I show that a Nash equilibrium $q^*$ cannot Pareto dominate an optimin point $p^*$. To reach a contradiction suppose that $q^*$ Pareto dominates $p^*$. Note that for every player $j$ $\inf_{p'_{-j}\in B_{-j}(q^*)} u_{j}(q^*_j,p'_{-j}) = u_j(q^*)$ and $u_j(q^*) \geq u_j(p^*)$ (with at least one strict inequality) by our supposition. However, then $p^*$ is not an optimin point, which contradicts our supposition.
\end{proof}

Note that Proposition~\ref{prop:nashvsoptimin} would remain true if we interchange Nash equilibrium with a profile of maximin strategies in an $n$-person game. This is because it is a well-known fact that a player's payoff from a Nash equilibrium cannot be strictly lower than their maximin strategy payoff.

It is also well-known that for every Nash equilibrium there is a Pareto efficient profile in which every player is (weakly) better off under the efficient profile. However, if a player profitably deviates from an efficient profile---like deviating from (C,C) in the prisoner's dilemma---it can be disastrous for the non-deviators. What is remarkable with optimin is that by Proposition~\ref{prop:nashvsoptimin} players guarantee super-Nash payoffs even if others do deviate unilaterally and profitably.

While Proposition~\ref{prop:nashvsoptimin} described above is a promising result from a technical point of view, it is not immediately clear how different optimin and Nash equilibrium predictions can be in experimental games. I next apply optimin to two well-studied experimental games in which Nash equilibrium and optimin predictions are in stark contrast. The optimin predictions are consistent with the comparative statics of human behavior in these games, whereas it is well-established that Nash equilibrium predictions are not.

\subsection{The finitely repeated prisoner's dilemma} 

The finitely repeated prisoner's dilemma is a game in which defection is a dominant strategy in the stage game (but not in the repeated game), which is illustrated below. 
\[
\begin{array}{ r|c|c| }
	\multicolumn{1}{r}{}
	&  \multicolumn{1}{c}{\text{Cooperate}}
	& \multicolumn{1}{c}{ \text{$\;$ Defect $\;$} }\\
	\cline{2-3}
	\text{Cooperate}&  3,3 & 0,5 \\
	\cline{2-3}
	\text{Defect}&  5,0 & 1,1 \\
	\cline{2-3}
\end{array}
\]
\vspace{0.5 mm}

The unique subgame perfect equilibrium, as is widely known, prescribes players to defect in every round. Experiments show that (i) initial cooperation grows as the number of rounds increases, and (ii) cooperation decreases as the game progresses; for a meta analysis of many prisoner's dilemma experiments, see \cite{embrey2017} and the references therein. These regularities can be explained by the optimin criterion. Although the unique optimin point coincides with the unique Nash equilibrium in the one-shot game, the Tit-for-Tat profile generally satisfies the optimin criterion when the game is repeated. This is because even if a player attempts to exploit cooperative behavior, the cooperator's minimal payoff is in general greater than the subgame perfect equilibrium payoff.  The minimal payoffs of cooperation rise as the number of rounds increases in a game. However, within a repeated game, these minimal payoffs progressively decline and finally become smaller than the minimal payoffs of defection, which explains the declining amount of cooperation in the last rounds of a repeated game.

To illustrate why the Tit-for-Tat is an optimin, assume that the game is repeated 100 times. Because there is a profitable deviation in the last round, the Tit-for-Tat is evidently not a subgame perfect equilibrium. Nonetheless, by the last round in which a player deviates from the Tit-for-Tat profile, each player is already \textit{guaranteed} to receive a payoff of $99\times3=297$, which is nearly three times the payoff of a player in the unique subgame perfect equilibrium.

\subsection{The traveler's dilemma} 

\begin{figure}
	\[
	\begin{array}{ r|c|c|c|c|c| }
	\multicolumn{1}{r}{}
	&  \multicolumn{1}{c}{\text{100}}
	& \multicolumn{1}{c}{\text{99}}
	& \multicolumn{1}{c}{\cdots}
	& \multicolumn{1}{c}{\text{3}}
	& \multicolumn{1}{c}{\text{2}} \\
	\cline{2-6}
	\text{100}&  100,100 & 97,101 & \cdots & 1,5 & 0,4\\
	\cline{2-6}
	\text{99}&  101,97 & 99,99 & \cdots & 1,5 & 0,4\\
	\cline{2-6}
	\text{\vdots}&  \vdots & \vdots & \ddots & \vdots & \vdots \\
	\cline{2-6}
	\text{3}&  5,1 & 5,1 & \cdots & 3,3 & 0,4\\
	\cline{2-6}
	\text{2}&  4,0 & 4,0 & \cdots & 4,0 & 2,2 \\
	\cline{2-6}
	\end{array}
	\]
	\caption{Traveler's dilemma with reward/punishment parameter $r=2$.}
	\label{fig:travelers}
\end{figure}

Figure~\ref{fig:travelers} illustrates the traveler's dilemma in which each player selects a number between 2 and 100 \citep{basu1994}. The one who chooses the smaller number, $n$, receives $n+r$, where $r>1$, and the other player receives $n-r$. If they both choose $n$, then they each receive $n$. The reward parameter $r=2$ in the original game. Experimental research suggests that the reward/punishment parameter has a significant impact on the behavior. When $r$ is ``small,'' as in the original game, the subjects' behavior converges towards the highest number, whereas when $r$ is ``large,'' their behavior converges towards the lowest number \citep{capra1999,rubinstein2007}. The Nash equilibrium is (2,2) irrespective of $r$. In contrast, the optimin is consistent with the direction of play. The unique optimin point coincides with the Nash equilibrium when $r$ is large, but when $r$ is small, only the highest pair of numbers satisfies the optimin criterion. The reason for this is because as the reward parameter grows, so do the minimum payoffs of cooperation, and at some point, the minimal payoffs of the greatest number (100) become lower than the minimal payoffs of the lowest number (2).

\section{Conclusion}
\label{sec:conclusion}

To conclude, AI systems have achieved superhuman performance in several major zero-sum games since the early 1990s. However, most social interactions are mixed-motive games, where measuring the performance of AI systems is a non-trivial task. In this study, I have defined a benchmark notion of super-Nash performance to measure the performance of AI systems and humans in $n$-person mixed-motive settings. Accordingly, I have shown that in every $n$-person game there is always an optimin point that achieves this benchmark.

%
%%\pagebreak
%%\bibliographystyle{plainnat}
\bibliographystyle{chicago}

\bibliography{me}

\begin{thebibliography}{}

\bibitem[\protect\citeauthoryear{Axelrod}{Axelrod}{1980}]{axelrod1980}
Axelrod, R. (1980).
\newblock Effective choice in the prisoner's dilemma.
\newblock {\em The Journal of Conflict Resolution\/}~{\em 24\/}(1), 3--25.

\bibitem[\protect\citeauthoryear{Basu}{Basu}{1994}]{basu1994}
Basu, K. (1994).
\newblock The traveler's dilemma: Paradoxes of rationality in game theory.
\newblock {\em The American Economic Review\/}~{\em 84\/}(2), 391--395.

\bibitem[\protect\citeauthoryear{Brown and Sandholm}{Brown and
  Sandholm}{2019}]{brown2019}
Brown, N. and T.~Sandholm (2019).
\newblock {Superhuman AI for multiplayer poker}.
\newblock {\em Science\/}~{\em 365\/}(6456), 885--890.

\bibitem[\protect\citeauthoryear{Campbell, Hoane~Jr, and Hsu}{Campbell
  et~al.}{2002}]{campbell2002}
Campbell, M., A.~J. Hoane~Jr, and F.-h. Hsu (2002).
\newblock Deep {Blue}.
\newblock {\em Artificial Intelligence\/}~{\em 134\/}(1-2), 57--83.

\bibitem[\protect\citeauthoryear{Capra, Goeree, Gomez, and Holt}{Capra
  et~al.}{1999}]{capra1999}
Capra, C.~M., J.~K. Goeree, R.~Gomez, and C.~A. Holt (1999).
\newblock Anomalous behavior in a traveler's dilemma?
\newblock {\em American Economic Review\/}~{\em 89\/}(3), 678--690.

\bibitem[\protect\citeauthoryear{Daskalakis, Goldberg, and
  Papadimitriou}{Daskalakis et~al.}{2009}]{daskalakis2009}
Daskalakis, C., P.~W. Goldberg, and C.~H. Papadimitriou (2009).
\newblock The complexity of computing a {N}ash equilibrium.
\newblock {\em SIAM Journal on Computing\/}~{\em 39\/}(1), 195--259.

\bibitem[\protect\citeauthoryear{Embrey, Fréchette, and Yuksel}{Embrey
  et~al.}{2017}]{embrey2017}
Embrey, M., G.~R. Fréchette, and S.~Yuksel (2017, 08).
\newblock {Cooperation in the Finitely Repeated Prisoner’s Dilemma}.
\newblock {\em The Quarterly Journal of Economics\/}~{\em 133\/}(1), 509--551.

\bibitem[\protect\citeauthoryear{Gilboa and Zemel}{Gilboa and
  Zemel}{1989}]{gilboa1989b}
Gilboa, I. and E.~Zemel (1989).
\newblock Nash and correlated equilibria: Some complexity considerations.
\newblock {\em Games and Economic Behavior\/}~{\em 1\/}(1), 80--93.

\bibitem[\protect\citeauthoryear{Goeree and Holt}{Goeree and
  Holt}{2001}]{goeree2001}
Goeree, J.~K. and C.~A. Holt (2001).
\newblock Ten little treasures of game theory and ten intuitive contradictions.
\newblock {\em American Economic Review\/}~{\em 91\/}(5), 1402--1422.

\bibitem[\protect\citeauthoryear{Ismail}{Ismail}{2019}]{ismail2019}
Ismail, M.~S. (2019).
\newblock One for all, all for one---von {N}eumann, {W}ald, {R}awls, and
  {P}areto.
\newblock arXiv preprint arXiv:1912.00211.

\bibitem[\protect\citeauthoryear{Ismail}{Ismail}{2020}]{ismail2020}
Ismail, M.~S. (2020, July).
\newblock One for {All}, {All} for {One}---{Von} {Neumann}, {Wald}, {Rawls},
  and {Pareto}.
\newblock In {\em Proceedings of the 21st {ACM} {Conference} on {Economics} and
  {Computation}}, Virtual Event Hungary, pp.\  763--764.

\bibitem[\protect\citeauthoryear{Lugovskyy, Puzzello, Sorensen, Walker, and
  Williams}{Lugovskyy et~al.}{2017}]{lugovskyy2017}
Lugovskyy, V., D.~Puzzello, A.~Sorensen, J.~Walker, and A.~Williams (2017).
\newblock An experimental study of finitely and infinitely repeated linear
  public goods games.
\newblock {\em Games and Economic Behavior\/}~{\em 102}, 286--302.

\bibitem[\protect\citeauthoryear{McKelvey and Palfrey}{McKelvey and
  Palfrey}{1992}]{mckelvey1992}
McKelvey, R.~D. and T.~R. Palfrey (1992).
\newblock An experimental study of the centipede game.
\newblock {\em Econometrica: Journal of the Econometric Society\/}, 803--836.

\bibitem[\protect\citeauthoryear{Rubinstein}{Rubinstein}{2007}]{rubinstein2007}
Rubinstein, A. (2007).
\newblock Instinctive and cognitive reasoning: A study of response times.
\newblock {\em The Economic Journal\/}~{\em 117\/}(523), 1243--1259.

\bibitem[\protect\citeauthoryear{Schaeffer, Burch, Bjornsson, Kishimoto,
  Muller, Lake, Lu, and Sutphen}{Schaeffer et~al.}{2007}]{schaeffer2007}
Schaeffer, J., N.~Burch, Y.~Bjornsson, A.~Kishimoto, M.~Muller, R.~Lake, P.~Lu,
  and S.~Sutphen (2007).
\newblock {Checkers Is Solved}.
\newblock {\em Science\/}~{\em 317\/}(5844), 1518--1522.

\bibitem[\protect\citeauthoryear{Silver, Huang, Maddison, Guez, Sifre, Van
  Den~Driessche, Schrittwieser, Antonoglou, Panneershelvam, Lanctot,
  et~al.}{Silver et~al.}{2016}]{silver2016}
Silver, D., A.~Huang, C.~J. Maddison, A.~Guez, L.~Sifre, G.~Van Den~Driessche,
  J.~Schrittwieser, I.~Antonoglou, V.~Panneershelvam, M.~Lanctot, et~al.
  (2016).
\newblock Mastering the game of {Go} with deep neural networks and tree search.
\newblock {\em Nature\/}~{\em 529\/}(7587), 484--489.

\bibitem[\protect\citeauthoryear{von Neumann}{von Neumann}{1928}]{neumann1928}
von Neumann, J. (1928).
\newblock Zur {T}heorie der {G}esellschaftsspiele.
\newblock {\em Mathematische Annalen\/}~{\em 100}, 295--320.

\bibitem[\protect\citeauthoryear{von Neumann and Morgenstern}{von Neumann and
  Morgenstern}{1944}]{neumann1944}
von Neumann, J. and O.~Morgenstern (1944).
\newblock {\em Theory of Games and Economic Behavior\/} (1953, 3rd ed.).
\newblock Princeton: Princeton University Press.

\end{thebibliography}

\end{document}